\documentclass{ifacconf}

\usepackage{graphicx}      
\usepackage{natbib}        

\usepackage{siunitx}
\usepackage{subcaption}
\usepackage{duckuments}
\usepackage{multirow}

\usepackage{tikz}

\usepackage{tkz-euclide}
\usetikzlibrary{arrows}

\newtheorem{proposition}{Proposition}
\AfterEndEnvironment{Proposition}{\noindent\ignorespaces}
\newtheorem{assumption}{Assumption}
\AfterEndEnvironment{Assumption}{\noindent\ignorespaces}
\newtheorem{remark}{Remark}
\AfterEndEnvironment{Remark}{\noindent\ignorespaces}

\usepackage{amsmath,amssymb,amsfonts}
\usepackage{bbm}

\newcommand{\tx}{\mathrm}

\newcommand{\pp}{\mathbf{p}}

\newcommand{\xx}{\mathbf{x}}
\newcommand{\bxx}{\Bar{\mathbf{x}}}
\newcommand{\BXX}{\Bar{\mathbf{X}}}
\newcommand{\bXX}{\mathbf{X}}
\newcommand{\XX}{\mathbb{X}}

\newcommand{\xxe}{\mathbf{x}_\tx{e}}
\newcommand{\xxh}{\mathbf{x}_\tx{h}}

\newcommand{\uu}{\mathbf{u}}
\newcommand{\buu}{\Bar{\mathbf{u}}}

\newcommand{\bUU}{\mathbf{U}}
\newcommand{\UU}{\mathbb{U}}

\newcommand{\uue}{\mathbf{u}_\tx{e}}
\newcommand{\uuh}{\mathbf{u}_\tx{h}}

\newcommand{\yy}{\mathbf{y}}
\newcommand{\YY}{\mathbb{Y}}
\newcommand{\bYY}{\mathbf{Y}}

\newcommand{\zz}{\mathbf{z}}

\newcommand{\bZZ}{\mathbf{Z}}

\newcommand{\RR}{\mathbb{R}}
\newcommand{\RRR}{\mathcal{R}}
\newcommand{\DD}{\mathbb{D}}
\newcommand{\DDD}{\mathcal{D}}
\newcommand{\EE}{\mathbb{E}}
\newcommand{\PP}{\mathbb{P}}
\newcommand{\bPP}{\mathbf{P}}
\newcommand{\PPP}{\mathcal{P}}
\newcommand{\FFF}{\mathcal{F}}

\newcommand{\AAA}{\mathcal{A}}

\newcommand{\KKK}{\mathcal{K}}

\newcommand{\NN}{\mathbb{N}}

\newcommand{\ocpNspace}{\!\!\!\!\!\!}

\DeclareMathOperator*{\argmin}{arg\,min}

\newcommand{\KL}{\tx{D}_\tx{KL}}

\newcommand{\bro}{\Bar{\rho}}

\newcommand{\quadsquared}{\quad \quad \quad \quad \quad \quad \quad \quad \quad \quad \quad \quad }

\begin{document}
\begin{frontmatter}

\title{Interactive Trajectory Planning with Learning-based Distributionally Robust Model Predictive Control and Markov Systems\thanksref{footnoteinfo}} 

\thanks[footnoteinfo]{This work was partially supported by the Wallenberg AI, Autonomous Systems and Software Program (WASP) funded by the Knut and Alice Wallenberg Foundation.}

\author[Chalmers,Volvo]{Erik Börve} 
\author[Chalmers]{Nikolce Murgovski} 
\author[Chalmers]{Morteza Haghir Chehreghani}
\author[Chalmers,Volvo]{Leo Laine}

\address[Chalmers]{Chalmers University of Technology, 
   Chalmersgatan 4, 412 96 Göteborg, Sweden (e-mail: \{borerik, first.last\}@chalmers.se).}
\address[Volvo]{Volvo Group Trucks Technology, 
   Gropegårdsgatan 2, 417 15 Göteborg, Sweden (e-mail: first.last@volvo.com)}

\begin{abstract}
We investigate interactive trajectory planning subject to uncertainty in the decisions of surrounding agents. To control the ego-agent, we aim to first learn the decision distribution and solve a \textit{Stochastic Model Predictive Control} (SMPC) problem. To account for errors in the learned distribution, we show that it is possible to utilize \textit{Probably Approximately Correct} (PAC) learning in combination with \textit{Distributionally Robust} (DR) optimization to obtain a solution which accounts for the errors induced by the learning model. The results indicate that our PAC learning-based DR-MPC framework provides a method to interpolate between a robust MPC and an omnipotent SMPC, based on the available number of samples.
\end{abstract}

\begin{keyword}
Interactive Planning, Distributionally Robust Optimization, Model Predictive Control, PAC-learning
\end{keyword}

\end{frontmatter}
\section{Introduction}
Trajectory planning is a crucial component of many robotic applications, such as manipulators, drones, and \sloppy Autonomous Vehicles (AVs) \citep{blackmore2011chance,borrelli2004collision,schouwenaars2001mixed}. Although much work has been dedicated towards this topic, the arguably most difficult challenge still remains: Safe and efficient planning in environments with exogenous agents, e.g humans. A major challenge to this end is the fact that the actions of the robot and humans can have an inherent influence on each other. Poor treatment of this interaction leads to an incorrect human motion model, which could significantly hinder performance and compromise safety by, e.g., causing dead-locks or collisions. Some examples of interactive trajectory planning are displayed in Fig. \ref{fig:scenario_examples}. Further, safety and efficiency can be conflicting objectives. A conservative planner may ensure high safety standards but could impair or even halt operations. Similarly, a more daring planner may attain higher performance, but could endanger nearby humans. Hence, an ideal robotic system should trade-off risk-minimization with performance-maximization while accurately assessing interactions between itself and humans. To this end, Model Predictive Control (MPC), combined with Machine Learning (ML), has gathered much attention in recent years. 
\begin{figure}[h!]
    \centering
    \begin{subfigure}{0.30\linewidth}
    \includegraphics[width=1\linewidth,trim={2.2cm 0.3cm 2.2cm 0.3cm},clip]{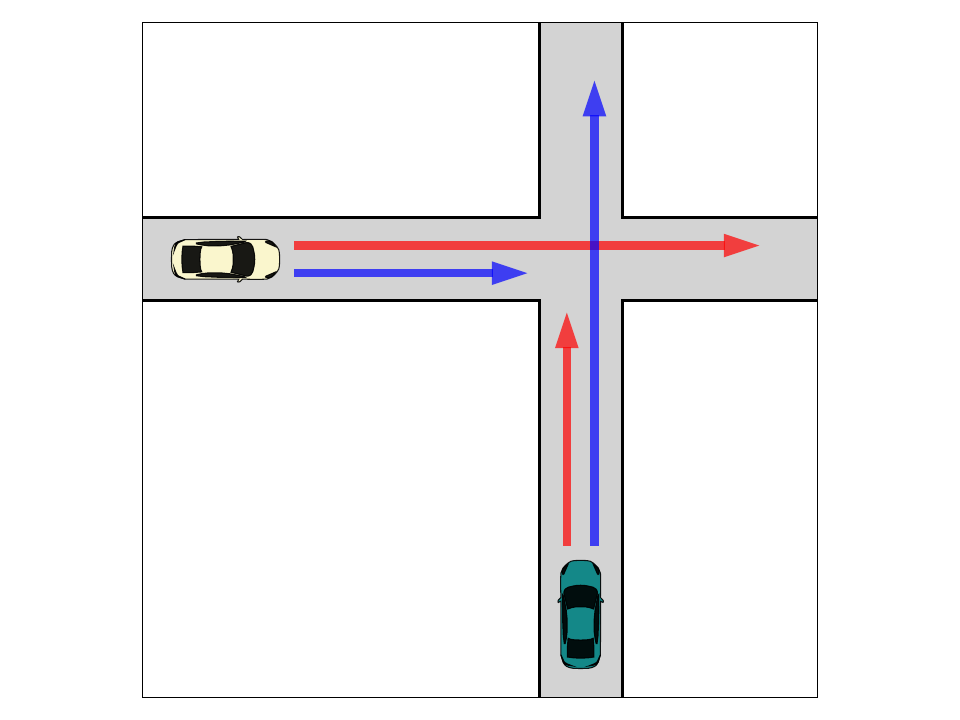}
    \end{subfigure}
    \begin{subfigure}{0.30\linewidth}
    \includegraphics[width=1\linewidth,trim={2.2cm 0.3cm 2.2cm 0.3cm},clip]{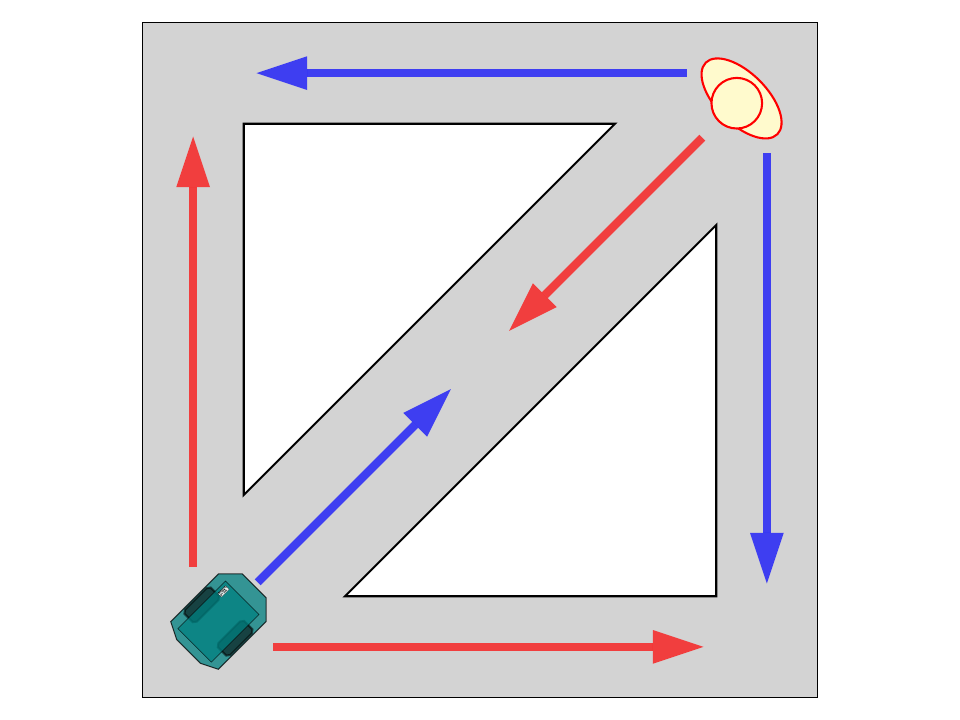}
    \end{subfigure}
    \begin{subfigure}{0.30\linewidth}
    \includegraphics[width=1\linewidth,trim={2.2cm 0.3cm 2.2cm 0.3cm},clip]{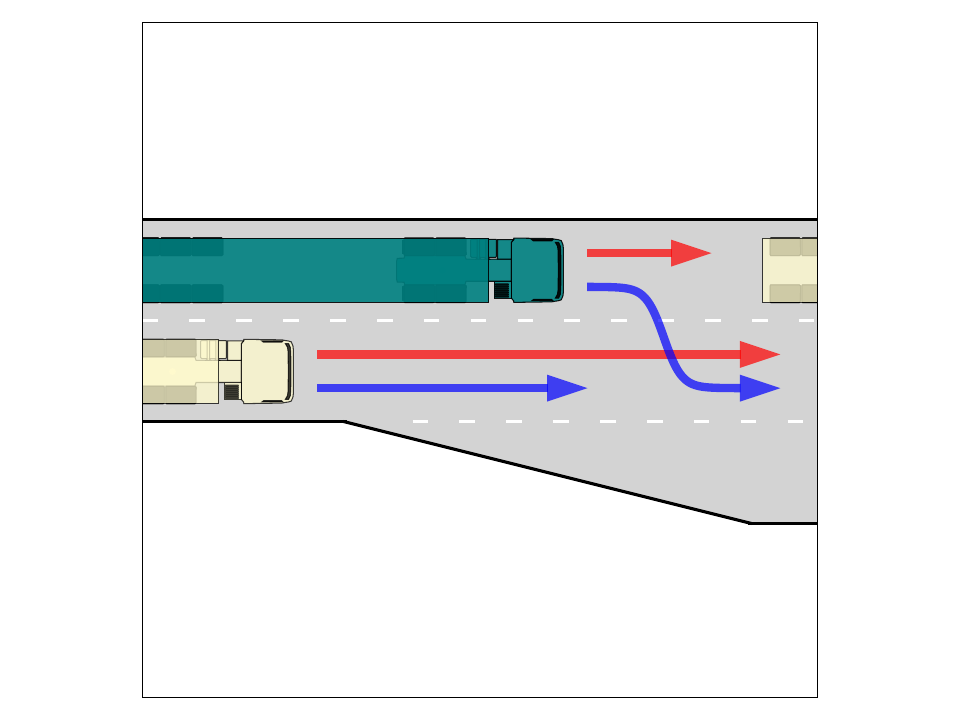}
    \end{subfigure}
    \caption{Interactive trajectory planning examples. The ego-agent (green) needs to negotiate with a human agent (yellow). Arrows indicate a discrete number of feasible alternatives in corresponding color.}
    \label{fig:scenario_examples}
\end{figure}
\subsection{Related Work}
A wide range of methods has been applied to this problem setting. In this paper, we will only discuss optimization-based methods and refer the interested reader to \cite{lavalle2006planning} for others. MPC optimizes performance metrics over a future horizon, subject to different model-based constraints, e.g., dynamics, actuator limitations, and collision avoidance. A compelling property of MPC is that it can be possible to establish rigorous guarantees on, e.g., optimality and constraint satisfaction, under the model assumptions. In dynamic environments, the collision avoidance constraints are heavily dependent on an accurate model of the surroundings. However, human movement is notoriously challenging to model, can be inherently uncertain and, in the case of interactions, depend on the robot's own motion. Recently, large-scale ML models have displayed impressive  performance for human trajectory predictions with uncertainty estimates \citep{salzmann2020trajectron++}. Such ML predictors have further been incorporated with Stochastic MPC (SMPC) to obtain a trade-off between performance and probabilistic collision avoidance satisfaction \citep{nair2022stochastic}. However, interactive planning is often lost, as large-scale ML predictors are difficult to integrate with the gradient-based optimizers used for MPC. Further, as the uncertainty quantification is heuristic, rigorous probabilistic constraint satisfaction is often lost. Some recent works has successfully integrated large-scale ML models \citep{borve2023interaction}, but without rigorously accounting for errors in the ML predictor. Rigorous constraint satisfaction has also been obtained with large-scale ML \citep{lindemann2023safe}. However, the guarantees hold only under assumptions that disregard interactions. Of particular interest to this work is \cite{schuurmans2023general}, that utilize Distributionally Robust Optimization (DRO) and treat humans as Markov Systems with a discrete, finite set of available decisions. A learning-based Distributionally Robust MPC (DR-MPC) is then utilized to obtain probabilistic guarantees on collision avoidance with respect to human decision making. A crucial component of this approach is the construction of ambiguity sets that contain the true human model with high probability. However, the utilized theoretical framework does not allow for ML models that are complex enough to describe interactions. 

\subsection{Outline and Contribution}
In this work, we extend the learning-based DR-MPC framework to allow for more complex models. In particular, we show that this extension allows for trajectory planning applications that consider both interactive decisions and provide rigorous uncertainty quantification. To this end, we leverage excess risk bounds from Probably Approximately Correct (PAC) learning to construct valid ambiguity sets for interactive human decision models. Finally, we demonstrate our learning-based DR-MPC in an interactive trajectory planning scenario and compare performance and constraint satisfaction with its robust and stochastic counterparts. In a broader sense, this work presents a step towards learning-based and efficient interactive trajectory planning with rigorous safety guarantees.

\section{Trajectory Planning Problem} \label{sec:MP_problem}
In this section, we first present the general SMPC-based trajectory planning problem. A key step involves formulating the problem over a scenario tree based on the possible human decisions. In a later section, we reformulate this problem as a learning-based Distributionally Robust Optimal Control Problem. To refer to the ego-agent we utilize subscript $\tx{e}$. For clarity, we consider a single human agent, referred to with subscript $\tx{h}$. However, the presented framework is applicable to any number of human agents.

\subsection{Ego- and Human-agent Dynamics}
We consider the ego-agent state $\xx_\tx{e} \in \XX_\tx{e} \subseteq \RR^{N_\tx{e,x}}$ and control actions $\uu_\tx{e} \in \UU_\tx{e} \subseteq \RR^{N_\tx{e,u}}$ with continuously differentiable, potentially non-linear, discrete-time dynamics $f_\tx{e}: \RR^{N_\tx{e,x}} \times \RR^{N_\tx{e,u}}\mapsto \RR^{N_\tx{e,x}}$. The human-agent is considered in a similar fashion with $\xx_\tx{h} \in \RR^{N_\tx{h,x}}$, $\uu_\tx{h} \in \RR^{N_\tx{h,u}}$, and $f_\tx{h}: \RR^{N_\tx{h,x}} \times \RR^{N_\tx{h,u}}\mapsto \RR^{N_\tx{h,x}}$. For notational convenience, we concatenate these attributes as $\bxx = [\xxe^\top,\xxh^\top]^\top$, $\buu = [\uue^\top,\uuh^\top]^\top$, $f = [f_\tx{e}^\top,f_\tx{h}^\top]^\top$.

Indeed, the aim of the MPC problem is to obtain the ego-agent control actions $\uu_\tx{e}$. Following the approach in, e.g \cite{schuurmans2023general,chen2022interactive}, the human control actions are characterized by a predefined set of control laws, each associated with a specific interactive decision. The decision-making process is modeled as a discrete random variable $\yy \in \YY = \{y_1, \dots, y_d\}$, where $d =|\YY|$. In, e.g., the road crossing scenario in Fig. \ref{fig:scenario_examples}, a human driver may, e.g., consider: $y_1$ corresponding to breaking, and $y_2$ corresponding to driving through the crossing. To consider interactions, the distribution of the decisions is dependent on the states of both the ego- and human-agent as,
\begin{equation} \label{eq:Ydist_def_general}
    \yy \sim \pp_\theta(\yy|\bxx) = \big[\PP[\yy=y_j \, | \, \bxx]\big]_{j=1}^d \,\, ,
\end{equation}
with absolute constant parameters $\theta$. The human control actions are then obtained by sampling a stochastic control law $\kappa: \RR^{N_\tx{h,x}} \times \YY \mapsto \RR^{N_\tx{h,u}}$. As in prior work, $f_\tx{h}$ and $\kappa$ are assumed to be available apriori while $\pp_\theta(\bxx)$ is considered unknown. In, e.g., the road crossing scenario of Fig. \ref{fig:scenario_examples}, $f_\tx{h}$ can, e.g., be chosen as a kinematic model, while $\kappa$ may be a set of human driver models, e.g., \cite{IDM}. While one might wish to also learn these functions, this problem setting already has the following attractive properties: 1.) Through the decision distribution, we explicitly account for the multimodality of interactive trajectory planning, in turn often reflecting the largest source of uncertainty, see, e.g., Fig. \ref{fig:scenario_examples}. 2.) With a tractable $\pp(\bxx)$ this model is well suited for gradient-based optimizers, enabling interactive planning. 3.) As we will soon show, we may obtain rigorous uncertainty quantification with a learning-based $\pp_\theta(\yy|\bxx)$. Considering uncertainty in $f_\tx{h}$ and $\kappa$ is a topic we consider of particular interest for future work.

\subsection{Stochastic Model Predictive Control on Scenario Trees}
\begin{figure}[t!]
    \centering
    \begin{tikzpicture}[
    roundnode/.style={circle, draw=black,line width=0.75pt, fill=white!40, minimum size=1pt,minimum size=1pt,scale=1.25},
    roundedbox/.style={draw,rounded corners,fill=gray!10,thick,inner sep=3pt,minimum width=8em},
    coverbox/.style={fill=white,},yscale=0.4,xscale=3
    ]

\draw[->,thick] (-0.25,-5) -- (2.5,-5) node[right] {$k$};
\draw[-,dotted,gray!30,thick] (0,4.5) -- (0,-4);
\draw[-,dotted,gray!30,thick] (1,4.5) -- (1,-4);
\draw[-,dotted,gray!30,thick] (2,4.5) -- (2,-4);

\draw [-,thick] (0,-5cm-8pt) node[below] {\small $0$}-- (0,-5cm+8pt) ;
\foreach \x in {1,2,...,2} {%
    \draw [-] (\x,-5cm-4pt) -- (\x,-5cm+4pt) node[below,yshift = -4.5pt] {\small $\x$};
}

\node[roundnode,label = {[yshift=-0.45cm]\footnotesize 0},minimum size = 1mm] at (0,0) (0) {};
\node[roundnode,label = {[yshift=-0.45cm]\footnotesize 1}] at (1,2) (1) {};
\node[roundnode,label = {[yshift=-0.45cm]\footnotesize 2}] at (1,-2) (2) {};


\node[roundnode,label = {[yshift=-0.45cm]\footnotesize 3}] at (2,4) (3) {};
\node[roundnode,label = {[yshift=-0.45cm]\footnotesize 4}] at (2,1) (4) {};
\node[roundnode,label = {[yshift=-0.45cm]\footnotesize 5}] at (2,-1) (5) {};
\node[roundnode,label = {[yshift=-0.45cm]\footnotesize 6}] at (2,-4) (6) {};


\draw[->,gray] (0) -- (1) node[midway,above,xshift=-5pt,yshift = 4pt] { \color{black} \footnotesize $\mathbb{P}({\yy_0 \! = \! y_1}|\bxx_0)$} ;
\draw[->,gray] (0) -- (2) node[midway,below,xshift=-5pt,yshift = -4pt] {\color{black} \footnotesize $\mathbb{P}({\yy_0 \! = \! y_2}|\bxx_0)$};


\draw[->,gray] (1) -- (3) node[midway,above,xshift=0pt,yshift = 6pt] { \color{black} \footnotesize $\mathbb{P}({\yy_1 \! = \! y_1}|\bxx_1)$} ;
\draw[->,gray] (1) -- (4) node[midway,below,xshift=0pt,yshift = -2pt] {\color{black} \footnotesize $\mathbb{P}({\yy_1 \! = \! y_2}|\bxx_1)$};

\draw[->,gray] (2) -- (5) node[midway,above,xshift=0pt,yshift = 2pt] { \color{black} \footnotesize$\mathbb{P}({\yy_2 \! = \! y_1}|\bxx_2)$} ;
\draw[->,gray] (2) -- (6) node[midway,below,xshift=0pt,yshift = -6pt] {\color{black} \footnotesize $\mathbb{P}({\yy_2 \! = \! y_2}|\bxx_2)$};

\node[text=black,font = \footnotesize,yshift=15pt,xshift=-2pt] at (0) {($\bxx_{0},\!\uu_{\tx{e},0}$)};

\node[text=black,font = \footnotesize,yshift=15pt,xshift=-2pt] at (1) {($\bxx_{1},\!\uu_{\tx{e},1}$)};
\node[text=black,font = \footnotesize,yshift=15pt,xshift=-2pt] at (2) {($\bxx_{2},\!\uu_{\tx{e},2}$)};

\end{tikzpicture}
    \caption{Example of a scenario tree for $|\YY|=2$.}
    \label{fig:scenario_tree_example}
\end{figure}
The SMPC problem is formulated over a finite, discrete-time horizon $k = 0, \dots, N$. As $|\YY|$ is finite, we can enumerate the human control actions and construct a directed scenario tree over the horizon. We will refer to nodes with the index $\iota \in \NN = \{0, \dots, N_\iota\}$. The tree starts from its root $\iota = 0$ and terminates at it's leaf nodes $\iota \in \NN_f$. To describe the parent of a node we utilize $\tx{Pr}(\iota)$, and similarly for the set of children $\tx{Ch}(\iota)$. Further, to ease notation, we introduce $\iota^+$ as a single node that is a direct descendant of $\iota$, i.e., $\iota = \tx{Pr}(\iota^+)$ and $\iota^+ \in \tx{Ch}(\iota)$. An example with $|\YY|=2$ is displayed in Fig. \ref{fig:scenario_tree_example}.

Each node contains the joint state $\bxx_\iota$ and each node, excluding leaf nodes, contains the ego-control actions $\uu_\tx{e,\iota}$ and a random variable $\yy_\iota$. Allowing a small notational abuse, we define $\tx{y}_{\iota^+}$ as the human decision at node $\iota$ responsible for the transition to node $\iota^+$. Hence, we may describe the joint state dynamics as follows,
\begin{equation}
    \bxx_{\iota^+} = f(\bxx_\iota, \buu_\iota | \yy_\iota = y_{\iota^+}), \, \forall \iota^+ \in \tx{Ch}(\iota).
\end{equation}
The dynamics are repeated for all nodes, excluding leaf nodes, producing all states in the scenario tree $\BXX = [\bxx_\iota]_{\forall \iota \in \NN}$. We may similarly define $\bXX_\tx{e} = [\xx_{\tx{e},\iota}]_{\forall \iota \in \NN}$, $\bUU_\tx{e} = [\uu_{\tx{e},\iota}]_{\forall \iota \in \NN \setminus \NN_f}$, and $\bYY = [\yy_\iota]_{\forall \iota \in \NN \setminus \NN_f}$. Ensuring collision avoidance amounts to ensuring that there exists a positive distance between the space occupied by each agent, for all nodes in the tree. We may describe this with the following scalar-valued constraints.
\begin{equation} \label{eq:dist_constraint}
    g(\bxx_\iota) = d_\tx{safe} -\tx{dist}(\xx_{\tx{e},\iota}, \xx_{\tx{h},\iota}) < 0, \, \forall \iota \in \NN.
\end{equation}
Here, the distance between the two agents is measured by $\tx{dist}: \RR^{N_\tx{e,x}} \times \RR^{N_\tx{h,x}} \mapsto \RR$ and $d_\tx{safe} \in \RR^+$ is an additional safety margin. Indeed, the states of the human agent are uncertain due to the random variables $\bYY$. To constrain the probability of encountering any collisions over the horizon, we may introduce a chance constraint as,
\begin{equation}
    \PP_{\bYY|\BXX} \left [ \cup_{\iota\in \NN\setminus\{0\}} \, g(\bxx_\iota) \geq 0 \, | \, \bxx_0 \right] \leq \varepsilon
\end{equation}
where $\varepsilon \in [0,1]$ is an accepted probability threshold.

Lastly, the SMPC problem considers a stage cost $\ell: \RR^{N_{\tx{e},x}} \times \RR^{N_{\tx{e},u}} \mapsto \RR$, terminal cost $\ell_f: \RR^{N_{\tx{e},x}} \mapsto \RR$ and a terminal set $\XX_{\tx{e},f}$. This finally yields the complete SMPC problem formulation over the scenario tree as,
\begin{subequations} \label{eq:SMPC}
    \begin{align} \label{eq:SMPC_objective}
    & \!\!\! \!\min_{\bUU_\tx{e}}&& \!\!\!\! \EE_{\bYY|\BXX} \left [\sum_{\iota \in \NN \setminus \NN_f} \ell(\xx_{\tx{e},\iota},\uu_{\tx{e},\iota}) + \sum_{\iota \in \NN_f} 
    \ell_f(\xx_{\tx{e},\iota}) \right] \\
    \label{eq:SMPC_dynamics}
    & \! \!\tx{s.t.}  && \ocpNspace \bxx_{\iota^+} \! = f(\bxx_\iota, \buu_\iota | \yy_\iota \!= y_{\iota^+}),\, \forall \iota^+  \! \! \in \!\tx{Ch}(\iota), \, \forall \iota \in \! \NN \setminus \NN_f \\ \label{eq:SMPC_dist}
    &&&\ocpNspace \yy_\iota \sim \pp_\theta(\yy_\iota|\xx_\iota), \; \forall \iota \in \NN \setminus \NN_f\\ \label{eq:SMPC_CC}
    &&&\ocpNspace \PP_{\bYY|\BXX} \left [\underset{\iota\in \NN\setminus\{0\}}{\bigcup} \, g(\bxx_\iota) \geq 0 \, \Big{|} \, \bxx_0 \right] \leq \varepsilon\\ \label{eq:SMPC_box_constraints}
    &&&\ocpNspace \xx_{\tx{e},\iota} \in \XX_\tx{e}, \; \forall \iota \in \NN \setminus \NN_f; \; \uu_{\tx{e},\iota} \in \UU_\tx{e},  \; \forall \iota \in \NN \setminus \NN_f\\ \label{eq:SMPC_boundary_constraints}
    &&&\ocpNspace \bxx_0 = \bxx(t); \; \xx_{\tx{e},\iota} \in \XX_{\tx{e},f}, \; \forall \iota \in \NN_f
    \end{align}
\end{subequations}
which includes: expected cost \eqref{eq:SMPC_objective}, dynamics \eqref{eq:SMPC_dynamics}, human decision distribution \eqref{eq:SMPC_dist}, chance constraint \eqref{eq:SMPC_CC}, ego-vehicle constraints \eqref{eq:SMPC_box_constraints} and boundary constraints \eqref{eq:SMPC_boundary_constraints}. Indeed, our aim is to solve the above problem for $\bUU_\tx{e}$ from the current traffic state $\bxx(t)$. One can obtain a tractable version of the above problem by reformulating \eqref{eq:SMPC_objective} and \eqref{eq:SMPC_CC}, as in, e.g., \cite{borve2025tight}, and learning the parameterized distribution $\pp_\theta$. However, to obtain rigorous guarantees we additionally need to quantify how well we are able to learn $\pp_\theta$. In the following sections, we propose a method to quantify this error utilizing ambiguity sets and further show that such ambiguity sets may be used to construct a tractable distributionally robust optimal control problem.

\section{Machine Learning Problem} \label{sec:ML_problem}
In the following section, we present the ML problem for learning the decision distribution. We further discuss generalization bounds on the learning scheme in the form of excess risk bounds. 

\subsection{Basic Definitions} \label{sec:basic_ML}
We consider a classification problem with random variables $\yy \in \YY = \{y_1,y_2,\dots,y_d\}$ and $\xx \in \XX \subseteq\mathbb{R}^{n_x}$ with a joint distribution $(\xx,\yy) \sim \DDD$. More precisely, we consider the joint distribution which is absolutely continuous with respect to $\xx$ as,
\begin{equation}
    \DDD = \bPP_\theta(\xx,\yy) = \pp(\xx)\pp_\theta(\yy|\xx)
\end{equation}
where $\pp_\theta(\yy|\xx) \in \PPP_\theta$ is the conditional label distribution and $\theta \in \RR^{n_x}$ are absolute constant parameters. In our MPC setting, $\yy$ are labels corresponding to a decision and $\xx$ are state-dependent features. Indeed, we aim to obtain an estimate of the conditional distribution by utilizing functions $\hat{\pp}_\theta(\yy|\xx) \in \FFF_\theta$. To this end, we define a risk measure utilizing the cross-entropy, i.e.,
\begin{equation} \label{eq:true_CE_risk}
    \RRR_\DDD(\hat{\pp}_\theta) = \EE_{\bPP_\theta} \big [-\log \hat{\bPP}_\theta \big] = H(\bPP_\theta,\hat{\bPP}_\theta)
\end{equation}
where $H(p,q)$ notes the cross-entropy between two distributions $(p, \,q)$ and $\hat{\bPP}_\theta = \pp(\xx)\hat{\pp}_\theta(\yy|\xx)$. With $n$ i.i.d samples from $\DDD$ we utilize empirical risk minimization (ERM) by constructing a training set $\DD = \{(x_i,y_i)\}_{i=1}^n$, and defining an empirical risk measure,
\begin{equation}
    \RRR_\DD(\hat{\pp}_\theta) = \frac{1}{n} \sum_{i=1}^n  - \log\big(\hat{\pp}_\theta(x_i,y_i) \big)
\end{equation}
where $\mathbf{1}_{\yy=y_i}$ is an indicator function for $\yy=y_i$. The ERM is then obtained by solving the following optimization problem,
\begin{equation} \label{eq:ERM_problem}
    \hat{\pp}_\theta = \argmin_{\pp} \{ \RRR_\DD(\pp) \; : \; \pp \in \FFF_\theta\}
\end{equation}
Naturally, since the estimate $\hat{\pp}_\theta$ is based on a finite sample of $\DDD$, the above problem will not return the true distribution. Indeed, we desire a result that is as close to the true distribution as possible. The following section provides a framework for measuring this discrepancy in terms of how well $\hat{\pp}_\theta$ minimizes the risk.
\subsection{Excess Risk Bounds} \label{sec:excess_risk_bounds}
Much work has been dedicated towards deriving bounds on the excess risk for many different machine learning algorithms. Bounds on the excess risk typically take the form of,
\begin{equation} \label{eq:excess_risk}
    \PP \left [\RRR_\DDD(\hat{\pp}_\theta)-\RRR_\DDD(\pp_\theta) \leq r(n,\alpha) \right] \geq 1- \alpha
\end{equation}
for some $r(n,\alpha) \in \RR^+$, depending on the number of samples $n$, and some probability threshold $\alpha \in [0,1]$. Similar to prior work in DR-MPC, we will assume uniform convergence, which we state more formally with the following assumption.

\begin{assumption}[Uniform Convergence] \label{as:uniform_convergence}
Consider the setting of Subsection \ref{sec:basic_ML} with risk measure $\RRR_\DDD$ of $\pp_\theta$ and empirical estimator $\RRR_\DD$. Uniform convergence implies,
\begin{equation}
    \lim_{n \to \infty } \PP \left [ \big | \min_{\pp \in \FFF_\theta}\RRR_\DD(\pp)-\RRR_\DDD(\pp_\theta) \big | > \epsilon \right] = 0.
\end{equation}
for any $\epsilon \geq 0$.
\end{assumption}
Bounds of the type \eqref{eq:excess_risk} are additionally challenging in the setting of Section \ref{sec:basic_ML} since the risk is unbounded, but may be obtained by imposing constraints on the random variable $\xx$ and $\theta$. Different approaches, e.g., VC-dimensions \citep{vapnik2015uniform}, Rademacher complexity \citep{bartlett2002rademacher} or PAC-Bayes bounds \citep{mcallester1998some}, yield different constraints with varying restrictiveness in different applications. Note that our approach is flexible with regard to the choice of bound and may be adapted based on $\PPP_\theta$ and $\FFF_\theta$. 
\begin{remark}
Assumption \ref{as:uniform_convergence} implies that the empirical risk of the ERM converges to the true risk of the true distribution. In the problem setting of Subsection \ref{sec:basic_ML}, this assumption is satisfied if the function class $\FFF_\theta$ is complex enough to include the true distribution $\pp_\theta$, i.e., $\PPP_\theta \subseteq \FFF_\theta$. This assumption is identical in principle to prior work, e.g., \cite{schuurmans2023general}, while allowing us to consider larger families for $\PPP_\theta$ and $\FFF_\theta$.
\end{remark}

\subsection{Example Problem Formulation} \label{sec:learning_problem_example}
To provide an early intuition for the learning setting, we describe a simple interactive distribution example that will be used later in the simulation study. We will consider a case where $\yy \in \{-1,1\}$ is conditionally Bernoulli distributed as,
\begin{equation} \label{eq:cond_dist_example}
    \pp_\theta(\yy=y|\xx=x) = \tx{Ber} \big(\sigma ( y\langle x,\theta \rangle)\big)
\end{equation}
where $\sigma(z) = \frac{1}{1+\exp(z)}$ and the parameters $\theta$ are unknown. Additionally, considering constraints on the random variable $\xx \in \mathbb{B}_\xx = \{\xx : ||\xx|| \leq B\}$ and the parameters $\theta \in \mathbb{B}_\theta = \{\theta  :  ||\theta|| \leq R \}$, we may utilize a Rademacher complexity bound on the excess risk. We summarize the result with the following proposition.
\begin{proposition}[Excess risk bound example] \label{prop:excess_risk}
Consider the above learning problem with definitions in Subsection \ref{sec:basic_ML}, $\xx \in \mathbb{B}_\XX$, ERM \eqref{eq:ERM_problem}, and $\FFF_\theta = \{\sigma(y \langle x, \theta \rangle) \! : \! ||\theta|| \leq R \}$. Under Assumption \ref{as:uniform_convergence}, we may bound the excess risk as,
\begin{equation*}
    \RRR_\DDD(\hat{\pp}_\theta)-\RRR_\DDD(\pp_\theta) \leq r(\alpha,n) = \frac{BR}{\sqrt{n}} \left ( 2 + \sqrt{2\log{\frac{2}{\alpha}}} \, \right )
\end{equation*}
which holds with probability $1-\alpha$.
\end{proposition}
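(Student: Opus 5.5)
The plan is the standard Rademacher-complexity argument for logistic loss over a norm-bounded linear class. Write the loss as $\ell_\theta(x,y) = -\log \sigma(y\langle x,\theta\rangle) = \log(1+\exp(y\langle x,\theta\rangle))$; in the paper's sign convention $\sigma(z)=1/(1+\exp(z))$. The map $\phi(z)=\log(1+e^{z})$ is $1$-Lipschitz. Let $\hat{\pp}_\theta$ be the ERM and $\pp_\theta$ the true model. By Assumption \ref{as:uniform_convergence} and the accompanying remark, the true model lies in $\FFF_\theta$, so $\RRR_\DD(\hat{\pp}_\theta)\le \RRR_\DD(\pp_\theta)$. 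We decompose
\begin{equation*}
\RRR_\DDD(\hat{\pp}_\theta)-\RRR_\DDD(\pp_\theta) \le \big[\RRR_\DDD(\hat{\pp}_\theta)-\RRR_\DD(\hat{\pp}_\theta)\big] + \big[\RRR_\DD(\pp_\theta)-\RRR_\DDD(\pp_\theta)\big],
\end{equation*}
and bound both brackets by $\sup_{\pp\in\FFF_\theta} |\RRR_\DDD(\pp)-\RRR_\DD(\pp)|$. The second bracket is a single fixed function and could be handled by Hoeffding alone, but a uniform treatment is simpler.

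Next, the uniform deviation is bounded with the Rademacher bound of \cite{bartlett2002rademacher}. On $\mathbb{B}_\xx\times\mathbb{B}_\theta$ we have $|y\langle x,\theta\rangle|\le BR$, so the loss takes values in $[\log(1+e^{-BR}),\log(1+e^{BR})]$. This interval has width at most $2BR$ by the $1$-Lipschitz property. McDiarmid's inequality, with bounded differences $2BR/n$, and symmetrization then give, with probability at least $1-\alpha$,
\begin{equation*}
\sup_{\pp\in\FFF_\theta}\big(\RRR_\DDD(\pp)-\RRR_\DD(\pp)\big) \le 2\mathfrak{R}_n(\ell\circ\FFF_\theta) + c\,BR\sqrt{\tfrac{\log(1/\alpha)}{n}} .
\end{equation*}
By Talagrand's contraction lemma, since $\phi$ is $1$-Lipschitz, $\mathfrak{R}_n(\ell\circ\FFF_\theta)\le \mathfrak{R}_n(\{x\mapsto\langle x,\theta\rangle:\|\theta\|\le R\})$. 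The label sign $y$ is absorbed because $y\epsilon_i$ has the same distribution as $\epsilon_i$. The linear-class bound via Cauchy–Schwarz and Jensen gives $\mathfrak{R}_n \le BR/\sqrt{n}$.

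Finally, the constants are assembled. The two deviation directions, one for each bracket, are combined by a union bound at level $\alpha/2$ each, which is the source of the $\log(2/\alpha)$. With range $2BR$ the McDiarmid term is $2BR\sqrt{\log(2/\alpha)/(2n)} = BR\sqrt{2\log(2/\alpha)/n}$. The main obstacle is the bookkeeping of these constants. A naive accounting gives $4BR/\sqrt{n}$ from the Rademacher terms, because the factor $2$ from symmetrization appears twice if both brackets are bounded uniformly. To reach exactly the stated $2BR/\sqrt{n}$, I would use the uniform bound only for the ERM bracket, giving $2BR/\sqrt{n}$, and use Hoeffding for the fixed true-model bracket. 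If the two concentration terms still do not merge into the single stated $\sqrt{2\log(2/\alpha)}$, I would fall back on citing the standard textbook statement of this excess-risk bound for bounded Lipschitz losses, which has exactly the stated form.
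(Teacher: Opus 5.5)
Your route is the one the paper intends: its proof is only the one-line appeal to \cite{bartlett2002rademacher}. Your decomposition, symmetrization with McDiarmid, contraction, and the $BR/\sqrt{n}$ bound for the norm-bounded linear class are that argument written out, in more detail than the paper gives.

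The step you leave open, the constant, closes exactly, so you do not need the fallback citation. What blocks you is the range estimate. On $[-BR,BR]$ the loss $\phi(z)=\log(1+e^{z})$ has a range of width exactly $BR$, not merely at most $2BR$, because $\phi(z)-\phi(-z)=\log\frac{1+e^{z}}{1+e^{-z}}=z$. Hence the bounded differences are $BR/n$, and each concentration term at level $\alpha/2$ equals $BR\sqrt{\log(2/\alpha)/(2n)}$. Your refined plan (uniform bound for the ERM bracket, Hoeffding for the fixed true-model bracket) then gives, with probability at least $1-\alpha$,
\begin{equation*}
\RRR_\DDD(\hat{\pp}_\theta)-\RRR_\DDD(\pp_\theta) \le \frac{2BR}{\sqrt{n}} + 2BR\sqrt{\frac{\log(2/\alpha)}{2n}} = \frac{BR}{\sqrt{n}}\left(2+\sqrt{2\log\frac{2}{\alpha}}\right),
\end{equation*}
which is exactly $r(\alpha,n)$. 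With width $2BR$ the concentration part doubles, which is the mismatch you anticipated.

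Two smaller points. First, the remark after Assumption \ref{as:uniform_convergence} says that $\PPP_\theta\subseteq\FFF_\theta$ is sufficient for uniform convergence, not that the assumption implies it. So it cannot be the source of $\pp_\theta\in\FFF_\theta$. That membership, which you need for $\RRR_\DD(\hat{\pp}_\theta)\le\RRR_\DD(\pp_\theta)$, comes directly from the standing constraint $\theta\in\mathbb{B}_\theta$ in Subsection \ref{sec:learning_problem_example}. Second, the paper's $\RRR_\DDD$ is a joint cross-entropy. You should note that the term $\EE[-\log\pp(\xx)]$ is common to both risks and cancels in the excess risk, which justifies working with the conditional log-loss throughout.
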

\textit{Proof:} Application of \cite{bartlett2002rademacher}.

\section{Ambiguity sets from Excess Risk}
In this section, we will transform our empirical risk bound to a bound on the generalization error of the conditional distribution estimate. To this end, we will construct ambiguity sets that are suitable for obtaining a tractable DR-MPC problem. 

A natural approach to account for uncertainty in the distribution estimate is to consider a set of distributions in a certain neighborhood of the estimate. Such sets are often referred to as divergence-based ambiguity sets and typically take the form,
\begin{equation}
    \AAA_\epsilon = \left \{P : \mathfrak{d}(P,\hat{P}) \leq \epsilon \right \}
\end{equation}
where $\hat{P}$ is a probability estimate, $\mathfrak{d}(P,\hat{P}) \in \RR$ is a divergence measure, and $\epsilon \in \RR^+$ determines the size of the neighborhood \citep{rahimian2019distributionally}. Often, $\epsilon$ is selected to ensure probabilistic guarantees on the true distribution’s inclusion within the set. We will now show that cross-entropy-based risk measures are closely related to divergence-based ambiguity sets with the following proposition.
\begin{proposition}\label{prop:ambiguity_not_tractable} (Ambiguity Set from Excess Risk) \\
Consider the setting of Subsections \ref{sec:basic_ML}, \ref{sec:excess_risk_bounds} where the learning problem has a well defined bound on the excess risk $r(n,\alpha)$ as in \eqref{eq:excess_risk}. We may then consider the following ambiguity set,
    \begin{equation} \label{eq:ambiguity_not_tractable}
        \AAA_{\alpha,n} = \left \{\bPP : \KL \big( \bPP || \hat{\bPP}_\theta) \leq r(\alpha,n) \right \}
    \end{equation}
    where $\KL$ is the KL-divergence and $\bPP_\theta \in \AAA_{\alpha,n}$ with $1-\alpha$.
\end{proposition}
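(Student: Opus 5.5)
The plan is to rewrite the excess risk as a KL-divergence and then apply the high-probability bound \eqref{eq:excess_risk}. The key identity is the standard decomposition of cross-entropy into entropy plus KL-divergence,
\begin{equation*}
    H(\bPP_\theta,\hat{\bPP}_\theta) = H(\bPP_\theta) + \KL\big(\bPP_\theta || \hat{\bPP}_\theta\big),
\end{equation*}
together with the observation that $\RRR_\DDD(\pp_\theta) = H(\bPP_\theta,\bPP_\theta) = H(\bPP_\theta)$. Subtracting the two risks in \eqref{eq:true_CE_risk} then gives
\begin{equation*}
    \RRR_\DDD(\hat{\pp}_\theta)-\RRR_\DDD(\pp_\theta) = \KL\big(\bPP_\theta || \hat{\bPP}_\theta\big).
\end{equation*}
In words, the excess risk is exactly the KL-divergence from the true joint distribution to the estimated one.

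To verify this carefully, I would write $\EE_{\bPP_\theta}[-\log \hat{\bPP}_\theta] - \EE_{\bPP_\theta}[-\log \bPP_\theta] = \EE_{\bPP_\theta}[\log(\bPP_\theta/\hat{\bPP}_\theta)]$. Both joints share the same marginal $\pp(\xx)$, since $\hat{\bPP}_\theta = \pp(\xx)\hat{\pp}_\theta(\yy|\xx)$. The marginal therefore cancels inside the logarithm, and the expression equals $\EE_{\xx\sim\pp}\big[\KL(\pp_\theta(\cdot|\xx)\,||\,\hat{\pp}_\theta(\cdot|\xx))\big]$, which is the joint KL by the chain rule. If one instead uses the paper's literal convention of a $\log\hat{\pp}_\theta$-only loss, the $\log\pp(\xx)$ terms appear identically in both risks and cancel in the difference, so the result is the same.

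The main technical obstacle is finiteness. The marginal terms $\EE[-\log \pp(\xx)]$ can be infinite or ill-defined for a continuous $\xx$, so the subtraction must be justified. I would handle this by working with the conditional (label-only) cross-entropy, which is what the empirical risk actually estimates, so that no marginal terms appear at all. I would also note that finiteness of $r(n,\alpha)$ in \eqref{eq:excess_risk} guarantees that the difference is well defined on the event in question, and that the boundedness assumptions of Subsection \ref{sec:excess_risk_bounds} keep both conditional terms finite.

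With the identity in hand, the conclusion follows directly. The event in \eqref{eq:excess_risk} coincides with the event $\{\KL(\bPP_\theta||\hat{\bPP}_\theta)\le r(n,\alpha)\}$, which is precisely $\{\bPP_\theta\in\AAA_{\alpha,n}\}$. Hence $\PP[\bPP_\theta\in\AAA_{\alpha,n}]\ge 1-\alpha$. Here the probability is taken over the training set $\DD$, since $\hat{\bPP}_\theta$ and therefore the set $\AAA_{\alpha,n}$ are random.
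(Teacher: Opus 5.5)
Your proposal is correct and follows the paper's proof. Both arguments use $H(\bPP_\theta,\hat{\bPP}_\theta) = H(\bPP_\theta) + \KL(\bPP_\theta||\hat{\bPP}_\theta)$ to identify the excess risk with the joint KL-divergence, then substitute this into \eqref{eq:excess_risk}. Your remarks on the cancellation of $\pp(\xx)$ and on well-definedness are extra care the paper omits; in the label-only formulation the subtraction is always legitimate because $\RRR_\DDD(\pp_\theta)\le \log d$, so you do not need to appeal to finiteness of $r(n,\alpha)$ or to boundedness assumptions.
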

\textit{Proof:} From the definition of the cross entropy we have,
\begin{equation*}
    \RRR_\DDD(\hat{\pp}_\theta) = H(\bPP_\theta,\hat{\bPP}_\theta) = \KL\big (\bPP_\theta||\hat{\bPP}_\theta \big ) + H\left (\bPP_\theta \right)
\end{equation*}
where $H(p)= \EE_p[-\log p]$ is the entropy of a distribution $p$. Hence, 
\begin{align*}
    & \RRR_\DDD(\hat{\pp}_\theta) - \RRR_\DDD(\pp_\theta) = H(\bPP_\theta,\hat{\bPP}_\theta) - H(\bPP_\theta,\bPP_\theta)\\
    & =\KL\big (\bPP_\theta||\hat{\bPP}_\theta \big ) + H\left (\bPP_\theta\right) - H\left (\bPP_\theta \right) \\
    & =\KL\big (\bPP_\theta||\hat{\bPP}_\theta \big )
\end{align*}
Inserting in \eqref{eq:excess_risk} yields the desired result,
\begin{equation*}
    \PP \left [\KL\big (\bPP_\theta||\hat{\bPP}_\theta \big ) \leq r(n,\alpha) \right] \geq 1- \alpha
\end{equation*}
\hfill $\square$

Proposition \ref{prop:ambiguity_not_tractable} shows that the excess cross-entropy-based risk provides an ambiguity set for the joint distribution of $\xx$ and $\yy$. However, problem \eqref{eq:SMPC} treats the conditional distribution of $\yy$. Further, including \eqref{eq:ambiguity_not_tractable} does not produce a tractable SMPC problem as: (i) $\pp(\xx)$ may be unknown; (ii) The KL-divergence lacks tractable reformulations for many $\bPP_\theta(\xx,\yy)$ and $\hat{\bPP}_\theta(\xx,\yy)$. To address these issues, we construct an ambiguity set for the conditional distribution with the following proposition.

\begin{proposition}[Conditional Ambiguity Set]\label{prop:ambiguity_tractable}
    Consider the setting of Section \ref{sec:basic_ML} where the learning problem has a well defined bound on the excess risk $r(n,\alpha)$, as described in \eqref{eq:excess_risk}. For a given $x$ and an ERM of the conditional distribution $\hat{\pp}_\theta(\yy|x)$, we propose the following ambiguity set.
    \begin{equation} \label{eq:cond_ambiguity_set}
        \AAA_{\alpha,n}(x) = \left \{\pp \, : \, \KL \big( \pp || \hat{\pp}_\theta(\yy|x) \big) \leq \eta\big(r(\alpha,n)\big)  \right \}
    \end{equation}
    for which the following holds,
    \begin{equation*}
        \PP \left[\pp_\theta(\yy|x) \in \AAA_{\alpha,n}(x) \right] \geq \left(1-\frac{r(\alpha,n)}{\eta\big(r(\alpha,n)\big)}\right)(1-\alpha)
    \end{equation*} where $\eta \!: \![0,1] \!\mapsto\! [0,1]$ is increasing on its domain.
\end{proposition}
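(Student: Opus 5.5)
The plan is to pass from the joint KL bound of Proposition \ref{prop:ambiguity_not_tractable} to a pointwise conditional bound using the chain rule for KL-divergence together with Markov's inequality. The probability in the statement is read as being over both the random training set $\DD$ (which determines $\hat{\pp}_\theta$) and the query feature $x \sim \pp(\xx)$. The product structure of the bound reflects one step that holds with probability $1-\alpha$ and a second step that holds with probability at least $1 - r/\eta(r)$ on that event.

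\textbf{Step 1 (chain rule).} Since $\bPP_\theta = \pp(\xx)\pp_\theta(\yy|\xx)$ and $\hat{\bPP}_\theta = \pp(\xx)\hat{\pp}_\theta(\yy|\xx)$ share the same marginal in $\xx$, the marginal term of the KL chain rule vanishes. This gives
\begin{equation*}
\KL\big(\bPP_\theta||\hat{\bPP}_\theta\big) = \EE_{\xx\sim \pp(\xx)}\Big[\KL\big(\pp_\theta(\yy|\xx)\,||\,\hat{\pp}_\theta(\yy|\xx)\big)\Big].
\end{equation*}
Write $Z(x) := \KL\big(\pp_\theta(\yy|x)||\hat{\pp}_\theta(\yy|x)\big)\ge 0$.

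\textbf{Step 2 (event from Proposition \ref{prop:ambiguity_not_tractable}).} Let $E$ be the event over the draw of $\DD$ that $\EE_{\xx}[Z(\xx)] \le r(\alpha,n)$. By Proposition \ref{prop:ambiguity_not_tractable} and Step 1, $\PP[E]\ge 1-\alpha$.

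\textbf{Step 3 (Markov on the conditional).} Condition on $E$, with $\hat{\pp}_\theta$ now fixed. Because $Z\ge 0$, Markov's inequality over $\xx\sim\pp(\xx)$ gives
\begin{equation*}
\PP_{\xx}\big[Z(\xx) > \eta(r)\big] \le \frac{\EE_{\xx}[Z(\xx)]}{\eta(r)} \le \frac{r(\alpha,n)}{\eta\big(r(\alpha,n)\big)}.
\end{equation*}
The event $Z(x)\le \eta(r)$ is exactly $\pp_\theta(\yy|x)\in\AAA_{\alpha,n}(x)$. Combining the two steps,
\begin{equation*}
\PP[\pp_\theta(\yy|x)\in\AAA_{\alpha,n}(x)] \ge \PP[\,\cdot \mid E\,]\,\PP[E] \ge \left(1-\tfrac{r}{\eta(r)}\right)(1-\alpha).
\end{equation*}
This is the claimed bound; when $\eta(r)< r$ it is vacuous, which is consistent with the statement.

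\textbf{Main obstacle.} The delicate point is interpretation rather than computation. The guarantee cannot hold for an arbitrary deterministic $x$; it must be understood with $x$ distributed as $\pp(\xx)$, independent of $\DD$. I would state this explicitly, and handle the conditioning on $E$ carefully: $\hat{\pp}_\theta$ depends only on $\DD$, so conditioning on $E$ does not change the distribution of a fresh $x$. The role of $\eta$ being increasing on $[0,1]$ is only to make the set shrink with $r$, and it is not needed for the inequality itself. I would also justify that the common marginal $\pp(\xx)$ makes the marginal KL term exactly zero, so no knowledge of $\pp(\xx)$ is required.
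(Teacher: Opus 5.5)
Your proposal is correct and follows essentially the same route as the paper. The paper also rewrites the joint KL as $\EE_\xx$ of the conditional KL, intersects the $1-\alpha$ event from Proposition~\ref{prop:ambiguity_not_tractable} with a Markov-inequality event at threshold $t=\eta(r)$, and lower-bounds the probability by $\PP(B\mid A)\PP(A)$ via a Bayes-theorem step equivalent to your $\PP[\cdot]\ge\PP[\cdot\mid E]\PP[E]$. Your remark that the probability must be taken jointly over $\DD$ and an independent $x\sim\pp(\xx)$, not for a fixed $x$, is exactly what the paper's Markov step silently relies on, so it is a useful clarification rather than a departure.
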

\textit{Proof:}
Expanding the KL-divergence between $\bPP_\theta$ and $\hat{\bPP}_\theta$ yields,
\begin{align*}
    \KL\big ( \bPP_\theta || \hat{\bPP}_\theta \big ) & = \int_\XX \pp(x) \sum_{i=1}^{d} \pp_\theta(y_i|x) \log \frac{\pp_\theta(y_i|x)}{\hat{\pp}_\theta(y_i|x)} \tx{d}x \\
    &= \EE_\xx [\KL \big( \pp_\theta(\yy|\xx) || \hat{\pp}_\theta(\yy|\xx) \big)]
\end{align*}
which integrates over continuous features $x$ and sums over the discrete labels $y_i$. For notational convenience, we introduce $\zz = \KL \big( \pp_\theta(\yy|\xx) || \hat{\pp}_\theta(\yy|\xx) \big)$ where $\EE_\xx[\zz] = \KL\big ( \bPP_\theta || \hat{\bPP}_\theta \big )$. From the definition of the KL divergence, we have $\zz \in [0,\infty)$ and may consider Markov's inequality as
\begin{equation} \label{eq:markovs_on_Z}
    \PP(\zz \geq t) \leq \frac{\EE_\xx[\zz]}{t}
\end{equation}
for some $t\geq 0$. Together with the result of Proposition \ref{prop:ambiguity_not_tractable}, we consider two dependent events,
\begin{equation*}
    A: \EE_\xx[\zz] \leq r(\alpha,n), \quad B: \zz < t.
\end{equation*}
By definition, we have $\PP(A) \in [0,1]$, and by Proposition~\ref{prop:ambiguity_not_tractable} we have $\PP(A) \geq (1-\alpha)$. Event B, where $\PP(B) \in [0,1]$, describes the complement of \eqref{eq:markovs_on_Z} with $\PP(B) > 1-\frac{\EE_\XX[\zz]}{t}$. Hence,
\begin{align*}
    &\PP(B|A) = \PP \big(\zz < t \,|\, \EE_\XX[\zz] \leq r(\alpha,n) \big)\\
    &> 1-\frac{\EE_\XX[\zz]}{t} \geq 1- \frac{r(\alpha,n)}{t}.
\end{align*}
We may pick any $t\geq 0$ and consider $t=\eta(r)$, where $\eta \!: \![0,1] \!\mapsto\! [0,1]$ is increasing on its domain. Bayes theorem then gives,
{\footnotesize
\begin{align*}
    \PP(B) = \frac{\PP(B|A)\PP(A)}{\PP(A|B)} \geq \PP(B|A)\PP(A) \geq \left(1-\frac{r(\alpha,n)}{\eta\big(r(\alpha,n)\big)}\right)(1-\alpha).
\end{align*}
}%
Inserting the definition of B and $\zz$, we obtain,
\begin{align*}
    &\PP \left [ \KL \big( \pp_\theta(\yy|x) || \hat{\pp}_\theta(\yy|x) \big) \leq \eta\big(r(\alpha,n) \big)\right ] \\
    &\geq \left(1-\frac{r(\alpha,n)}{\eta\big(r(\alpha,n)\big)}\right)(1-\alpha).
\end{align*}
\hfill $\square$





\section{Learning-based Distributionally Robust Model Predictive Control}
With an ambiguity set that quantifies the discrepancy between the true and estimated conditional distributions, our aim is now to solve \eqref{eq:SMPC}, accounting for the worst-case distribution in the ambiguity set. To this end, we will reformulate the objective \eqref{eq:SMPC_objective} and chance constraints \eqref{eq:SMPC_CC} using nested risk measures. Formally defining and proving all reformulations in this section requires an extensive review of the existing literature. Hence, we aim to discuss key properties of the reformulations and refer the interested reader to, e.g., \cite{schuurmans2023general,sopasakis2019risk,rahimian2019distributionally,shapiro2021lectures} for risk measures and DRO, and \cite{sopasakis2019risk,chen2022interactive} for nested risk. To further ease notation, we will use $(\cdot)_{\tx{Ch}(\iota)} = [(\cdot)_{\iota^+}]_{\forall \iota^+ \in \tx{Ch}(\iota)}$ to refer to a vector of all $(\cdot)_{\iota^+}$ children of a node $\iota$.

\subsection{Risk Measures in Distributionally Robust Optimization}
As is well explored in the literature, e.g., \cite{shapiro2021lectures}, the KL-divergence-based ambiguity sets can be expressed with an equivalent conic representation. In the case of \eqref{eq:cond_ambiguity_set} the conic representation becomes,
\begin{equation*}
        \AAA_{\alpha,n}(x) = \left \{ \pp \in \RR^d : E\pp + F \nu \preceq_{\KKK}b_{\hat{\theta}}(x) \right \}
\end{equation*}
where $\KKK$ is a closed, convex cone, $\nu$ are auxiliary variables, $E$, $F$ are matrices, and $b_{\hat{\theta}}(x)$ is a vector-valued function that contains $\hat{\pp}_\theta$. Importantly, as $\AAA_{n,\alpha}$ is convex and conic in $\pp$ and $\nu$, we can construct a coherent, conditional risk measure for a random quantity $\bZZ \in \RR^d$ as
\begin{equation} \label{eq:risk_measure}
    \rho_{|x}[\bZZ] = \max_{\pp \in \AAA_{n,\alpha}(x)} \EE_\pp [\bZZ] = \max_{\pp \in \AAA_{n,\alpha}(x)} \pp^\top \bZZ.
\end{equation}
Provided that strong duality holds, problem \eqref{eq:risk_measure} can equivalently be expressed as
\begin{equation} \label{eq:risk_measure_dual}
    \!\rho_{|x}[Z] = \min_{\lambda} \{\lambda^Tb_{\hat{\theta}}(x) : E^T \lambda = Z, F^T = 0, \lambda \succeq_{\KKK^*} 0 \}
\end{equation}
where $\KKK^*$ is the dual cone of $\KKK$.

\begin{remark}
    Observe that in the scenario tree setting of Section~\ref{sec:MP_problem}, \eqref{eq:risk_measure_dual} will be conditioned on the state $\bxx_\iota$ with $\bZZ_{\tx{Ch}(\iota)}$. Hence, the above risk measure describes the expected $\bZZ_{\tx{Ch}(\iota)}$, subject to the worst-case $\pp \in \AAA_{n,\alpha}(\bxx_\iota)$, which in turn describes the probability of transitions from $\iota$ to $\iota^+\in \tx{Ch}(\iota)$.
\end{remark}

\subsection{Nested Risk Measures over Scenario Trees}
The conditional risk measure \eqref{eq:risk_measure} can be utilized for a single $\iota$ with children $\tx{Ch}(\iota)$. To obtain a distributionally robust version of \eqref{eq:SMPC_objective} and \eqref{eq:SMPC_CC} we need a risk measure that treats all nodes in the scenario tree. To this end, we utilize the nested risk formulation of \cite{sopasakis2019risk}.

\subsubsection{Nested Risk Objective } Observe that expanding the expectation with respect to $\bYY$ in \eqref{eq:SMPC_objective} yields 
\begin{align} \label{eq:cost_expanded}
    &\ell(\xx_{\tx{e},0},\uu_{\tx{e},0}) + \EE_{\yy_0|\bxx_0}\Big[ \ell(\xx_{\tx{e},1},\uu_{\tx{e},1}) + \EE_{\yy_1|\bxx_1}\big[ \; \dots  \\ \nonumber
    &\ell(\xx_{\tx{e},\tx{Pr}(N_\iota)},\uu_{\tx{e},\tx{Pr}(N_\iota)}) + \EE_{\yy_{\tx{Pr}(N_\iota)}|\xx_{\tx{e},\tx{Pr}(N_\iota)}}[\ell_f(\xx_{\tx{e},N_\iota})]\dots\Big]
\end{align}
Given that $\ell$ and $\ell_f$ are lower semi-continuous and level-bounded over a closed set, we can formulate \eqref{eq:risk_measure} for each expectation to construct a version of $\eqref{eq:SMPC_objective}$ that treats the nested risk as,
\begin{equation*}
\begin{split}
    &\bro_{|\bxx_0} =  \rho_{|\bxx_0}\Big[\ell(\xx_{\tx{e},1},\uu_{\tx{e},1}) + \rho_{|\bxx_1} \big [\; \dots \\
    &\ell(\xx_{\tx{e},\tx{Pr}(N_\iota)},\uu_{\tx{e},\tx{Pr}(N_\iota)}) +  \rho_{|\bxx_{\tx{Pr}(N_\iota)}}[\ell_f(\xx_{\tx{e},N_\iota})] \big ] \dots  \Big]
\end{split}
\end{equation*}
where $\bro_{|(\cdot)}$ denotes the nested risk objective. As further detailed in \cite{sopasakis2019risk}, given that \eqref{eq:risk_measure} is coherent, the nested risk objective omits a tractable dual-reformulation. To this end, a crucial step is to observe that we may consider a recursive formulation of the conditional risk measure in the form
\begin{equation} \label{eq:risk_measure_recursive}
    \gamma_\iota = \rho_{|\bxx_\iota} \big [ \ell(\xx_{\tx{e},\iota^+},\uu_{\tx{e},\iota^+})_{\tx{Ch}(\iota)} + \gamma_{\tx{Ch}(\iota)} \big ].
\end{equation}
In the case of leaf nodes $\iota^+ \in \tx{Ch}({\iota}) \subseteq \NN_f$, we naturally have $\gamma_\iota = \rho_{|\bxx_{\iota}} \big[ \ell_f(x_{\iota^+})_{\tx{Ch}(\iota)} \big]$. Hence, each $\gamma_\iota$ measures the risk of all subsequent reachable nodes in the tree. Expressing \eqref{eq:risk_measure_recursive} for each node in the tree with \eqref{eq:risk_measure_dual}, we can obtain the following reformulation of the nested risk objective,

\begin{subequations}
    \begin{equation}
    \begin{aligned} \label{eq:nested_risk_objective}
    \min_{\bUU_\tx{e},\Gamma, \Lambda} \ell(\xx_{\tx{e},0},\uu_{\tx{e},0}) + \gamma_0 \quadsquared
    \end{aligned}
    \end{equation}
    \begin{equation} \label{eq:nested_risk_objective_constraints}
    \begin{split}
    &\ocpNspace  \tx{s.t.}  \; \lambda_\iota^\top b_{\hat{\theta}}(\bxx_\iota) \leq \gamma_\iota , \forall \iota \in \NN \setminus \NN_f \\
    & \ocpNspace E^\top\lambda_\iota \!= \!\gamma_{\tx{Ch}(\iota)}\! + \ell(\xx_{\tx{e},\iota^+},\uu_{\tx{e},\iota^+})_{\tx{Ch}(\iota)},\! \forall \iota \in \! \NN \setminus \! \tx{Pr}(\NN_f)\!\!\!\!\!\!\!\\
    &\ocpNspace  E^\top\lambda_\iota = \ell_f(\xx_{\tx{e}})_{\tx{Ch}(\iota)}, \forall \iota \in \tx{Pr}(\NN_f)\\
    &\ocpNspace F^\top \lambda_\iota = 0, \forall \iota \in \NN \setminus \NN_f;\; \lambda_\iota \succeq_{\KKK^*}0, \;\forall \iota \in \NN \setminus \NN_f
    \end{split}
    \end{equation}
\end{subequations}

where $\Gamma = [\gamma_\iota]_{\forall \iota \in \NN\setminus\NN_f}$ and $\Lambda = [\lambda_\iota]_{\forall \iota \in \NN\setminus\NN_f}$ gathers the additional optimization variables. The full proof is available in \cite{sopasakis2019risk} with an applied version in \cite{chen2022interactive}. To ease notation, we gather all constraints \eqref{eq:nested_risk_objective_constraints} with $\Omega_{\ell,\ell_f} = \{(\BXX,\bUU_\tx{e},\Gamma,\Lambda) : \eqref{eq:nested_risk_objective_constraints} \,| \, \bxx_0\}$.

\subsubsection{Nested Risk Constraints} We will now derive the corresponding distributionally robust chance constraints of \eqref{eq:SMPC_CC}, utilizing the same nested risk framework. Following \cite{borve2025tight}, we may express an outer approximation of the joint chance constraints over the scenario tree as,
\begin{align} \nonumber
    \!\!\! \PP_{\bYY|\BXX} \!\!\!\left [ \underset{\iota\in \NN \setminus \{0\}}{\bigcup} \!\!\!\! g(\bxx_\iota) \geq 0 \, \bigg| \, \bxx_0 \right] \!\!& \!\overset{(i)}{\leq} \! \EE_{\bYY|\BXX} \!\! \! \left [\sum_{\iota \in \NN \setminus \{0\}} \! \!\! \! \mathbbm{1}_{\RR^+} \big( g(\bxx_\iota) \big) \right] \\ \label{eq:CC_reform}
    & \!\overset{(ii)}{\leq} \! \! \EE_{\bYY|\BXX} \!\! \!\left [\sum_{\iota \in \NN\setminus \{0\}}  \! \!\! \! \sigma_{b,\beta}(g(\bxx_\iota)) \right]
\end{align}
where $\mathbbm{1}_{\RR^+}:\RR \mapsto \{0,1\}$ is an indicator function of the positive real line, and $\sigma_{b,\beta}(x) = b/(1+\exp(-\beta x))$ is a sigmoid function. Here, (i) follows from Boole's inequality and (ii) follows by picking $b$ and $\beta$ such that $\sigma_{b,\beta} \in \tx{Epi}\, \mathbbm{1}_{\RR^+}$, where $\tx{Epi}$ denotes the epigraph. One can directly observe that the result of \eqref{eq:CC_reform} has a similar structure to that of the objective. As $\sigma_{b,\beta} \circ g$ is continuous and level bounded over a closed set, the reformulation of the nested risk chance constraints follows in a similar manner to that of the nested risk objective. For brevity, we forego repeating the same arguments. Similar to the objective, we will introduce auxiliary variables, $\delta_\iota \in \RR$, $\Delta=[\delta_\iota]_{\forall\iota \in \NN \setminus \NN_f}$, to describe the corresponding version of  \eqref{eq:risk_measure_recursive} and $\mu_\iota$, $M = [\mu_\iota]_{\forall \iota \in \NN \setminus \NN_f}$ to describe the dual variables of the conditional risk measures \eqref{eq:risk_measure_dual}. To again ease notation, we introduce a set $\Omega_{\sigma_{b,\beta} \circ g}$ to gather the constraints of the corresponding nested risk chance constraints.

\subsubsection{DR-MPC Problem Formulation}
With a Distributionally robust formulation of \eqref{eq:SMPC_objective} and \eqref{eq:SMPC_CC} we can now construct the complete DR-MPC problem.

\begin{subequations}\label{eq:LB-DR-MPC}
    \begin{align} \label{eq:LB-DR-MPC_objective}
        &  \!\!\! \min_{\substack{\bUU_\tx{e},\Gamma, \Lambda\\ \Delta, M}}  && \!\!\!\! \! \ell(\xx_{\tx{e},0},\uu_{\tx{e},0}) + \gamma_0\\ \label{eq:LB-DR-MPC_dynamics}
        & \!\!\!\tx{s.t.}  && \ocpNspace \ocpNspace \bxx_{\iota^+} \! = f(\bxx_\iota, \buu_\iota | \yy_\iota \!= y_{\iota^+}), \forall \iota^+  \! \! \in \!\tx{Ch}(\iota), \forall \iota \in \! \NN \setminus \! \NN_f \!\!\!\!\\ \label{eq:LB-DR-MPC_CC}
        &&& \ocpNspace \ocpNspace \delta_0 \leq \varepsilon; \; (\BXX,\bUU_\tx{e},\Delta,M) \in \Omega_{\sigma_{b,\beta} \circ g} \\\label{eq:LB-DR-MPC_objective_duals}
        &&& \ocpNspace \ocpNspace (\BXX,\bUU_\tx{e},\Gamma,\Lambda) \in \Omega_{\ell, \ell_f} \\ \label{eq:LB-DR-MPC_ego_constraints}
        &&&\ocpNspace \ocpNspace\xx_{\tx{e},\iota} \in \XX_\tx{e}, \; \forall \iota \in \NN \setminus \NN_f; \; \uu_{\tx{e},\iota} \in \UU_\tx{e},  \; \forall \iota \in \NN \setminus \NN_f \\ \label{eq:LB-DR-MPC_ego_boundary}
        &&& \ocpNspace\ocpNspace\bxx_0 = \bxx(t); \; \xx_{\tx{e},\iota} \in \XX_{\tx{e},f}, \; \forall \iota \in \NN_f.
    \end{align}
\end{subequations}
Here, \eqref{eq:LB-DR-MPC_objective} describes the distributionally robust objective with corresponding constraints \eqref{eq:LB-DR-MPC_objective_duals}, \eqref{eq:LB-DR-MPC_dynamics} describes the dynamics, \eqref{eq:LB-DR-MPC_CC} describes the distributionally robust, joint chance constraints, \eqref{eq:LB-DR-MPC_ego_constraints} describes ego-agent limitations, and \eqref{eq:LB-DR-MPC_ego_boundary} describes the boundary conditions.  In a closed-loop setting, \eqref{eq:LB-DR-MPC} is solved at each discrete time, given the current traffic state $\bxx(t)$, with the ego-agent applying each $\uu_{\tx{e},0}$. In an open-loop setting, \eqref{eq:LB-DR-MPC} is solved once to obtain the control actions for the entire horizon $\bUU_\tx{e}$, applying the corresponding actions over the discrete horizon.

\section{Simulation Study}
\begin{figure}[t!]
    \centering
    \begin{subfigure}{0.49\linewidth}
    \includegraphics[width=1\linewidth,trim={2.2cm 0.4cm 2.2cm 0.4cm},clip]{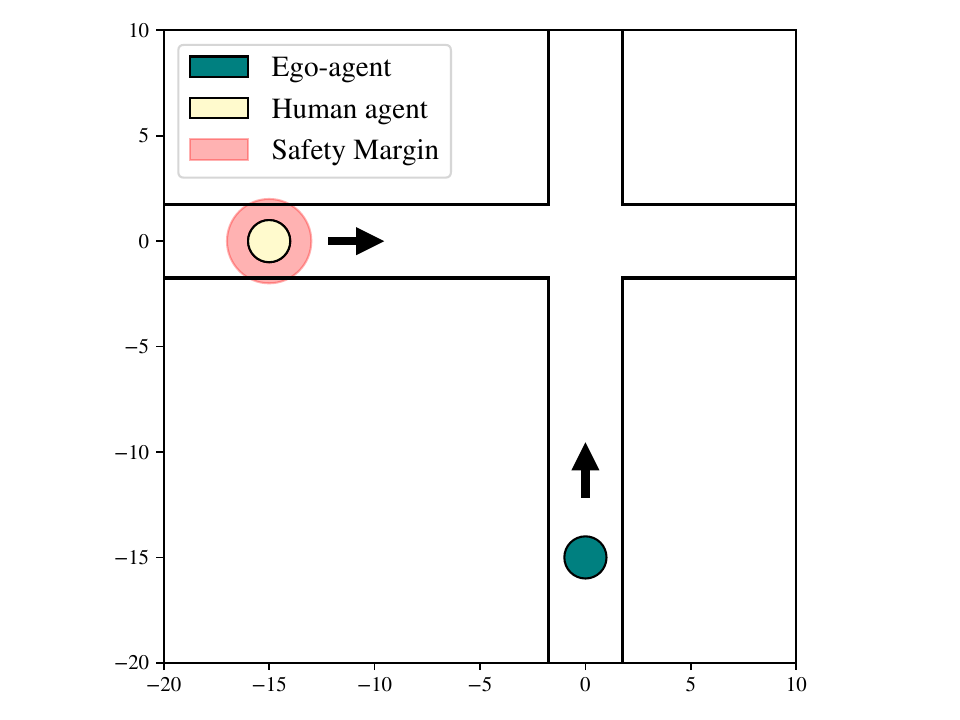}
    \caption{Initial simulation setup.}
    \label{fig:scene_two_way_intersection}
    \end{subfigure}
    \begin{subfigure}{0.49\linewidth}
    \includegraphics[width=1\linewidth,trim={0.35cm 0.8cm 0.2cm 0.35cm},clip]{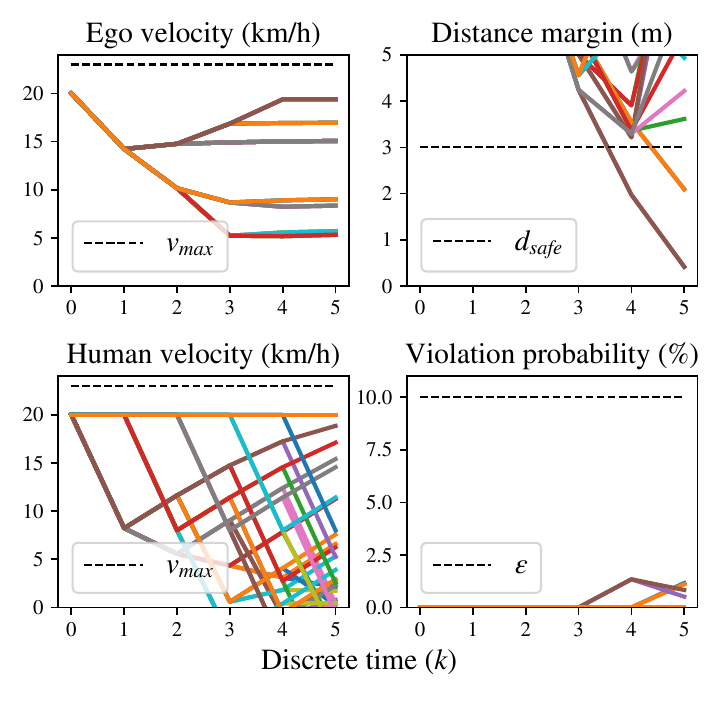}
    \caption{DR-MPC with $n\!=\!10^6$.}
    \label{fig:OCP_two_way_intersection}
    \end{subfigure}
    \caption{Road crossing case. Each color indicates a path through the scenario tree, from root- to leaf node.}
    \label{fig:two_way_intersection}
\end{figure}

We will now demonstrate our methods in an interactive trajectory planning example. To emphasize the properties of proposed method, we will consider a simple road crossing case between an Autonomous System (AS) and a human, see Fig. \ref{fig:scene_two_way_intersection}. To directly compare the result of the different optimization problems, we will investigate statistical results with the ego-agent adopting an open-loop control strategy. 

\subsubsection{Reference Controllers}
As a reference for the DR-MPC scheme, we will utilize two alternative MPC designs for solving problem \eqref{eq:SMPC}. First, a robust version (R-MPC), which does not utilize an estimate of $\pp_\theta$ and instead accounts for all possible $\xx_\tx{h}$ given $f_\tx{h}$ and $\kappa$, i.e., the collision avoidance constraints are not relaxed in any node, regardless of probability. Similarly, the objective is not weighted based on the probability of each respective node. Second, a stochastic version utilizing the ground truth distribution (GT-SMPC). This problem directly solves \eqref{eq:SMPC}, with the expected cost \eqref{eq:cost_expanded} and joint chance-constraint \eqref{eq:CC_reform}. More details of this reformulation is available in \cite{borve2025tight}. 

\subsubsection{Implementation} Both \eqref{eq:LB-DR-MPC} and \eqref{eq:ERM_problem} were formulated in $\tx{CasADi}$ \citep{casadi} and solved with $\tx{IPOPT}$ \citep{ipopt}. The simulations ran on a laptop equipped with a 12th Gen Intel(R) Core(TM) i7-12850HX CPU and 32 GB of RAM.

\subsection{Simulation Specifications}
Each agent $i \in \{\tx{e},\tx{h} \}$ is modeled as a one dimensional double integrator with states $\xx_i = [p_i,v_i]^\top$, control $\uu_i=[a_i]$, and discrete time dynamics $f_i(\xx_i,\uu_i) = [p_i+v_i \Delta t, v_i + a_i\Delta t]^\top$. In particular, $p_i$ reflects each agents position along their road, with the crossing placed in the origin. Both agents are initialized \SI{15}{\meter} away from the crossing at \SI{20}{\kilo \meter\per\hour}. The human agent may pick from two available decisions: $y_1$, corresponding to a breaking maneuver, and $y_2$, corresponding to tracking a reference $\xx_\tx{ref}$. The control law $\kappa$ is based on the Intelligent Driver Model \citep{IDM}, tuned for aggressive brake ($y_1$) and smooth reference tracking ($y_2$). For the decision distribution we consider a case with $\xx = -[p_\tx{e}/v_\tx{e},p_\tx{h}/v_\tx{h}]$, i.e., reflecting the signed time for the respective agent to reach the intersection, with $\pp_\theta(\yy|\xx)$ defined as in \eqref{eq:cond_dist_example} where $\theta = [3,3]$.

\subsubsection{Learner and Control Design}
\begin{table}[t!]
    \centering
    \caption{Shared Controller Parameters}
    \begin{tabular}{p{1.65cm}p{2.7cm}p{3cm}} \hline
         Parameter & Description & Values\\ \hline
         $\mathbf{Q}$ & State cost& $\tx{diag}(0,0.1)$\\
         $\mathbf{P}$& Terminal cost & $\tx{diag}(0,0.1)$\\
         $\mathbf{R}$& Control cost& $10$\\
         $\xx_{\tx{ref}}$& State reference & [0, \SI{20}{\kilo\meter\per\hour}] \\
         $\overline{\xx}$ &Upper state bound & [$\infty$, \SI{23}{\kilo\meter\per\hour}]\\
         $\underline{\xx}$ &Lower state bound & [$-\infty$, \SI{0}{\kilo\meter\per\hour}]\\
         $\overline{\uu}$ &Upper control bound & [\SI{5}{\meter\per\second\squared}]\\
         $\underline{\uu}$ &Lower control bound & [-\SI{5}{\meter\per\second\squared}]\\
         $r$ & Agent Radius & \SI{1}{\meter} \\
         $d_\tx{safe}$ & Safety margin & $2r+1$\SI{}{\meter}\\
         $\varepsilon$& Probability level& 0.1\\\hline
    \end{tabular}
    \label{tab:parameters_shared}
\end{table}
Indeed, the number of nodes, and further the number of variables scales exponentially with the length of the prediction horizon. In practice, scenario reduction strategies are often used to tackle this issue \citep{jacobsen2025combined}. In this study however, we aim to demonstrate the statistical properties, in reference to the ground-truth distribution. Hence, we opt for a relatively small $N=6$, with a relatively large $\Delta t= 1.0$ to balance computational complexity. All MPCs consider $\XX_\tx{e}= \XX_{\tx{e},f} = \{\xx : \underline{\xx} \leq \xx \leq \overline{\xx} \}$ and $\UU_{\tx{e}} = \{\uu : \underline{\uu} \leq \uu \leq \overline{\uu} \}$, with $\ell(\xxe,\uue) = \xxe^\top \mathbf{Q} \xxe + \uue \mathbf{R} \uue$ and $\ell_f(\xxe)=\xxe^T \mathbf{P} \xxe$. See Table \ref{tab:parameters_shared} for numerical values. As desired, this yields a challenging scenario, where the ego-vehicle is forced to take risk to improve performance. For the learner we will again consider the setting of Subsection \ref{sec:learning_problem_example}, with basic definitions in Subsection \ref{sec:basic_ML}. We will further consider Euclidean norms with $B = \sqrt{18}$, and $R = \sqrt{18}$, and $n$ i.i.d measurements of $x_i$ and $y_i$, assumed available offline. For the ambiguity set of Proposition \ref{prop:ambiguity_tractable}, we consider $\eta(r) = \sqrt{r}, r\in [0,1]$. As displayed in Table \ref{tab:results} we will investigate the solutions of the DR-MPC as the number of data points, and correspondingly as $\hat{\pp}_\theta$ and $r(\alpha,n)$, vary.


\subsection{Evaluation}
Fig. \ref{fig:OCP_two_way_intersection} provides a qualitative evaluation of the controller properties by displaying the predicted inter-vehicle distance (“Distance Margin”), the probability of violating the distance constraint (“Violation Probability”), and the velocity profiles of the respective vehicles. For a quantitative evaluation we may directly compute metrics by propagating the ground-truth distribution over the respective solutions $\bXX_\tx{e}$ and $\bUU_\tx{e}$. To this end, we consider: Expected Cost as the objective function subject to $\pp_\theta$; Crossing Rate as the rate of which the ego-vehicle crosses before the human; Stopping Rate as the rate of which the human crosses before the ego-vehicle, and Violation Rate as the rate of which the constraint $g$ is violated. The results are displayed in Table \ref{tab:results}. The R-MPC baseline attains violation free trajectory plans, but simultaneously obtains a remarkably high expected cost and low crossing rate. The omnipotent GT-SMPC manages to take the highest risk, and correspondingly achieve the highest performance. Recall that the GT-SMPC is not practically realizable as it is assumed to have access to the ground-truth distribution. The result for the DR-MPC can be observed to converge from the R-MPC, towards the GT-SMPC as the number of samples tends towards infinity, across all metrics. 
\begin{table}[t!]
    \centering
    \caption{Evaluation of Road Crossing Scenario.}
    \setlength{\tabcolsep}{3.6pt}
    \begin{tabular}{|c|c|c|c|c|c|}\hline
         & R-MPC& \multicolumn{3}{c|}{DR-MPC} &GT-SMPC\\\hline\hline
         $n$ & $-$ & $10^3$& $10^6$ & $10^9$ & $-$\\\hline 
         Expected Cost & 188.55 & 45.32 & 30.50 & 24.36 & 18.85\\ \hline
         Crossing Rate (\%) & 1.71& 50.00 & 50.31 & 55.90 & 58.76 \\ \hline
         Stopping Rate (\%) & 98.29 & $\sim$ 50.00 & 49.57 & 42.88 & 38.32\\ \hline
         Violation Rate (\%) & 0.00 & $\sim$0.00 & 0.12 & 1.22 & 2.92\\ \hline
    \end{tabular}
    \label{tab:results}
\end{table}

\subsection{Conclusions and Future Work}
The results indicate that the DR-MPC controllers provides a practically realizable method to interpolate between the R-MPC, and GT-SMPC methods based on the available number of samples $n$. The fact that the DR-MPC with $n=10^3$ provides a significant improvement over the robust controller, indicate that there exist some highly unlikely scenarios that significantly strain the performance of the R-MPC. However, we simultaneously observe that convergence towards the GT-SMPC with $n$ is slow. This could indicate that Propositions \ref{prop:excess_risk} and \ref{prop:ambiguity_tractable} are excessively conservative. Formulating tighter bounds for these propositions could be particularly interesting for practical applications. Further applications of this work could also consider different types of estimators, e.g., neural networks. We are additionally interested in rigorously investigating conditions for control guarantees, such as stability, recursive feasibility and probabilistic constraint satisfaction.




\begin{ack}
The authors thank Deepthi Pathare, Stefan Börjesson, Markus Gerdin, and Sten Elling Tingstad Jacobsen for insightful discussions concerning the research topic.
\end{ack}

\bibliography{ifacconf}

\appendix
\end{document}